\documentclass[11pt, reqno, a4paper]{amsart}
\usepackage{microtype}%if unwanted, comment out or use option "draft"
\usepackage[scr=euler]{mathalfa}
\usepackage[british]{babel}
%\usepackage{eucal}[mathscr]
%\graphicspath{{./graphics/}}%helpful if your graphic files are in another directory
\usepackage{textcomp}
\usepackage{tikz-cd}
\usepackage{amsmath, amssymb, amsthm}
\usepackage{mathtools}
\usepackage{bussproofs}
\usepackage[hang,flushmargin]{footmisc}

\usepackage[a4paper,margin=3.1cm]{geometry}

\newcommand{\eg}{e.g.~}
\newcommand{\ie}{i.e.~}
\usepackage{colonequals}
\newcommand{\defeq}{\colonequals}

\usepackage[hyphens]{url}
\usepackage[breaklinks=true]{hyperref}
\usepackage{cleveref}

 \usetikzlibrary{arrows,calc,patterns,positioning,shapes}
   \usetikzlibrary{decorations.pathmorphing}
   \tikzset{
       modal/.style={shorten >=1pt,shorten <=1pt,auto,
                     node distance=1.5cm,semithick},
       world/.style={circle,draw,minimum size=1cm,fill=gray!15},
       point/.style={circle,draw,fill=black,inner sep=0.5mm},
       reflexive/.style={->,in=120,out=60,loop,looseness=#1},
       reflexive/.default={5},
       reflexive point/.style={->,in=135,out=45,loop,looseness=#1},
       reflexive point/.default={25},
       }

\usepackage[framemethod=TikZ]{mdframed}
\mdfsetup{%
middlelinecolor=red,
middlelinewidth=2pt,
backgroundcolor=red!10,
roundcorner=10pt}

\usepackage{ccc_macros}

\begin{document}

\title[Combining contextuality and causality]{Combining contextuality and causality:\\ a game semantics approach}

\author[S.~Abramsky]{Samson Abramsky}
\address{Samson Abramsky \newline\indent Department of Computer Science, University College London \newline\indent 66--72 Gower Street, London WC1E 6EA, United Kingdom}
\email{s.abramsky@ucl.ac.uk}
\urladdr{http://www.cs.ucl.ac.uk/people/S.Abramsky/}
\author[R.S.~Barbosa]{Rui Soares Barbosa}
\address{Rui Soares Barbosa \newline\indent INL -- International Iberian Nanotechnology Laboraory \newline\indent Av. Mestre Jos\'e Veiga, 4715-330 Braga, Portugal}
\email{rui.soaresbarbosa@inl.int}
\urladdr{https://www.ruisoaresbarbosa.com/}
\author[A.~Searle]{Amy Searle}
\address{Amy Searle \newline\indent  Department of Physics, University of Oxford \newline\indent Clarendon Laboratory, Parks Road, Oxford OX1 3PU, United Kingdom}
\email{amy.searle@physics.ox.ac.uk}
\urladdr{https://www.physics.ox.ac.uk/our-people/searle}

\begin{abstract}
We develop an approach to combining contextuality with causality, which is general enough to cover causal background structure, adaptive measurement-based quantum computation, and causal networks. 
The key idea is to view contextuality as arising from a game played between Experimenter and Nature, allowing for causal dependencies in the actions of both the Experimenter (choice of measurements) and Nature (choice of outcomes).
\end{abstract}

\maketitle

\section{Introduction}
Contextuality is a key non-classical feature of quantum theory.
Besides its importance in quantum foundations, it has been linked to quantum advantage in information-processing tasks.
It also arises beyond quantum mechanics, cf. \cite{dzhafarov2015contextuality}.

We wish to generalise contextuality to accommodate \emph{causality} and \emph{adaptivity}.
These features may arise from:
\begin{itemize}
\item fundamental aspects of the physical setting, in particular the causal structure of spacetime;
\item the causal structure of an experiment, where measurements are performed in some causal order, and moreover, which measurements are performed may depend on the outcomes of previous measurements;
\item feed forward in measurement-based quantum computation (MBQC) \cite{briegel2009measurement}, and more generally, adaptive computation.
\end{itemize}
Our objectives include:
\begin{itemize}
\item A more fine-grained analysis of contextuality. 
Signalling should be allowed from the causal past, \ie the backward light cone, and thus no-signalling/no-disturbance should be imposed only from \emph{outside} it.
This in turn modifies the scope of classicality (non-contextuality), which now becomes relative to this weaker form of no-signalling constraints.
\item A better connection with computational models such as circuits and MBQC. Explicitly representing causal flows of information, \eg outputs of gates feeding into inputs of other gates, enables a deeper analysis of the relationships between contextuality and quantum advantage.
\end{itemize}

It turns out that capturing these different manifestations of causality and their interactions with contextuality is rather subtle.
The perspective we adopt here is to view contextuality as a two-person game played between Experimenter and Nature.
The Experimenter's moves are the measurements; \ie the actions of the Experimenter are to choose the next measurement to be performed. Nature's moves are the outcomes.
We can capture the various forms of causal dependency which may arise in terms of \emph{strategies} for Experimenter or for Nature. 

The game format is already familiar in the form of non-local games. 
There, the Verifier plays the role of the Experimenter, and Nature responds with outcomes according to the probability distributions 
%(on outcomes conditioned on the inputs) 
corresponding to Alice--Bob strategies. 
Non-local games are one-shot games, with a single round of interaction. By considering more general games, causal structure can be incorporated.

Our treatment builds upon the sheaf-theoretic approach to contextuality. A pleasing feature is that once one modifies the basic sheaf of events to take causal structure into account, the further definitions and treatment of contextuality follow \textit{automatically}.
This illustrates the advantages of a compositional and functorial approach.

\section{Previous work}

Pearl had already noted the connection with Bell inequalities in his seminal paper on testability of causal models with latent and instrumental variables \cite{DBLP:conf/uai/Pearl95}.
The extension of causal networks to allow for quantum resources, or more generally the operations offered by Generalised Probabilistic Theories, has been studied \eg in \cite{henson2014theory,chaves2018quantum}. 

Our starting point is the sheaf-theoretic treatment of contextuality introduced in \cite{abramsky2011sheaf}, and extensively developed subsequently.
This is a general, mathematically robust approach, which provides a basis for:
\begin{itemize}
\item the contextual fraction as a measure of contextuality \cite{abramsky2017contextual};
\item a general characterisation of noncontextuality inequalities in terms of consistency conditions (``logical Bell inequalities'', Boole's ``conditions of possible experience'') \cite{abramsky2012logical,abramsky2020classical};
\item resource theory of contextuality, and simulations between contextual systems \cite{abramsky2017contextual,karvonen2018categories, abramsky2019comonadic,karvonen2021neither,barbosa2021closing};
\item cohomological criteria for contextuality, the topology of contextuality \cite{abramsky2012cohomology,abramsky2015ccp,caru2017cohomology};
\item connections with logic and computation, database theory, constraint satisfaction \cite{abramsky2015contextual,conghaile2022cohomology};
\item generalisations \cite{barbosa2015DPhil,atserias2020consistency} and applications \cite{barbosa2014monogamy} of Vorob{\textquotesingle}ev's theorem \cite{vorobev1962consistent}.
\end{itemize}
The aim is to develop a refined version incorporating causality for which all these features will carry over.

There have been some prior works in this direction:
\begin{itemize}
\item Shane Mansfield in \cite{SM2017} introduced a refinement of the sheaf-theoretic approach with an order on the measurements,
and used it to study the two-slit experiment and the Leggett--Garg scenario.

\item Stefano Gogioso and Nicola Pinzani  in \cite{gogioso2021sheaf,gogioso2023geometry} developed a causal refinement of the sheaf-theoretic approach to non-locality, \ie for the case of  \emph{Bell-type scenarios}. 
They introduce an order on the sites or agents in the Bell scenario.
\end{itemize}
In both cases, the order is used to refine the no-signalling or no-disturbance condition which guarantees that joint distributions have consistent marginals.
In the presence of causality, signalling is allowed from within the backwards light cone or causal past of an event, and thus no-signalling is only required outside it.
%{\color{orange}This same leeway is also allowed for classical (hidden-variable) models, thus similarly refining the notion of non-contextuality.}

One may contrast this with  the Contextuality-by-Default (CbD) approach introduced by Ehtibar Dzhafarov and Janne Kujala  \cite{dzhafarov2016context,dzhafarov2016contextuality}.
In CbD, \emph{every} variable is regarded as contextual, differently labelled in each context. 
Classicality is characterised by the existence of a joint distribution under which different occurrences of variables with the same ``content'' have the same value with the maximum probability consistent with their individual marginals.
This allows for the analysis of arbitrary signalling systems, which has applications e.g.~in the behavioural sciences, where signalling is the norm. Moreover, this signalling may in general be  impossible to characterise or control.

By contrast, both in the above work by Mansfield and Gogioso--Pinzani and in the present paper, the aim is to \emph{explicitly describe} a given causal background -- which might arise from the structure of an experiment, circuit, or physical system -- and to characterise contextuality relative to such a background.

In this paper, we extend the scope of previous work in several directions.
First, we  allow more general dependencies of events on their prior causal histories.
In particular, the choice of which measurement to perform can depend on previous outcomes as well as on which measurements have been performed. This is an important feature of MBQC (``feedforward''), and more generally of adaptive computation.
Secondly, we extend general contextuality scenarios with causality, not just the non-locality Bell scenarios as in the Gogioso--Pinzani (GP) approach.
Finally, and most subtly, we recognise the different roles played by Nature and Experimenter in their causal interactions,  highlighting an important difference between causal background and adaptivity.

An interesting feature of our approach, in common with that of Gogioso--Pinzani, is that it proceeds essentially by modifying the sheaf of events from \cite{abramsky2011sheaf} to reflect the refined signalling constraints in the presence of causality.
Once this has been done, the remainder of the analysis of contextuality follows exactly the same script as in \cite{abramsky2011sheaf}. 
In particular, the appropriate definition of empirical model, the relaxed no-signalling constraints, and the notion of classicality/non-contextuality follow automatically.

\section{Examples}

As we have already suggested, causality in relation to contextuality has dual aspects. It may be imposed by Nature, in the form of a causal background against which the contextual behaviour plays out; or it may be imposed by the Experimenter, \eg to achieve computational effects (adaptive computation).
We illustrate these two sources of causality in two basic examples.

\subsection{Example I: causal background \`a la GP}
Consider a standard bipartite nonlocality scenario,  \eg the  Bell--CHSH scenario: 
 two experimenters, Alice and Bob, with sets of local measurements $I_A$ and $I_B$, and outcome sets $O_A$ and $O_B$.
 We may think of these as `ìnputs'' and ``outputs''.

 We now introduce a variation, in which
we assume that Alice's events \emph{causally precede} those of Bob.
Thus Bob's backward light cone includes the events where Alice chooses a measurement and observes an outcome.

Whereas in a standard, causally ``flat'' scenario, we would have deterministic outcomes given by functions 
\[ s_A : I_A \to O_A, \quad s_B : I_B \to O_B, \]
with these causal constraints, we have functions 
\[ s_A : I_A \to O_A, \quad s_B : I_A \times I_B \to O_B .\]
That is, the responses by Nature to Bob's measurement may depend on the previous measurement made by Alice.\footnote{Note that, in a deterministic model, Nature ``knows'' what response it would have given for Alice's measurement, so there is no real dependency on this outcome.}

If we have measurements $x_1, x_2 \in I_A$, $y \in I_B$, then $\{ (x_1,0), (y,0) \}$ and $\{ (x_2,0), (y,1) \} $ are valid histories in a single deterministic model.
If we now go to distributions over such histories, say $d_{\{x,y\}}$ as a distribution over outcomes for the Alice measurement $x$ and the Bob measurement $y$, then
of the usual no-signalling/compatibility equations
\begin{align}
    d_{\{x,y\}} |_{\{x\}}  & =  d_{\{x\}} \label{firstns}
    \\
    d_{\{x,y\}} |_{\{y\}}  & =  d_{\{y\}} 
\end{align}
only \eqref{firstns} remains. In fact, $d_{\{y\}}$ is not even defined, since $\{ y\}$ is not a ``causally secured'' context: the measurement $y$ can never occur on its own without a preceding Alice measurement.

Thus no-signalling is relaxed in a controlled fashion.

\subsection{Example II: Anders--Browne}
\label{ABsec}
The Anders--Browne construction  \cite{anders2009computational} shows how we can use a form of Experimenter-imposed causality to promote two sub-universal computational models (Pauli measurements and mod-2 linear classical processing) to universal MBQC.

It uses the GHZ state as a resource state:
\[\mbox{GHZ} \; = \; \frac{\mid \uparrow \uparrow \uparrow  \rangle \; + \; \mid \downarrow \downarrow \downarrow \rangle}{\sqrt{2}} . \]
Performing local Pauli $X$ and $Y$ measurements, we obtain the following table of possible joint outcomes\footnote{The table shows only the possibilistic information, \ie the supports of the probability distributions on joint outcomes, which are uniform on each row.}
\begin{center}
\begin{tabular}{c|cccccccc}
&  $+++$ & $++-$ & $+-+$ & $+--$  & $-++$ & $-+-$ & $--+$ & $---$  \\ \hline
$X\,X\,X$ & $1$ & $0$ & $0$ & $1$ & $0$ & $1$ & $1$ & $0$  \\
$X\,Y\,Y$ & $0$ & $1$ & $1$ & $0$ & $1$ & $0$ & $0$ & $1$  \\
$Y\,X\,Y$ & $0$ & $1$ & $1$ & $0$ & $1$ & $0$ & $0$ & $1$  \\
$Y\,Y\,X$ & $0$ & $1$ & $1$ & $0$ & $1$ & $0$ & $0$ & $1$  \\
\end{tabular}
\end{center}
In terms of parities, \ie products of $\pm1$ outputs under the correspondence given by the group isomorphism $\langle\{0,1\},\oplus\rangle \cong \langle\{+1,-1\}, \cdot\rangle$, the support satisfies the following equations:
\[ \begin{array}{lllcr}
X_1 & X_2 & X_3 & = & +1 \hphantom{.}\\
X_1 & Y_2  &Y_3 & = & -1 \hphantom{.}\\
Y_1 & X_2 & Y_3 & = & -1 \hphantom{.}\\
Y_1 & Y_2 & X_3 & = & -1 .
\end{array}
\]
The idea is to use an Experimenter causal flow to implement \textsc{OR}.\footnotemark
Taking $X$ as $0$, $Y$ as $1$, we consider the measurements for Alice and Bob as inputs to an \textsc{OR} gate.
We then use the following simple 
mod-2 linear mapping (\textsc{XOR} on the bit representations) from the Alice--Bob measurements to determine Charlie's measurement:
\[ \begin{array}{lcr}
\mathrlap{0}\hphantom{0}, \mathrlap{0}\hphantom{0} & \mapsto & 0  \\
\mathrlap{0}\hphantom{0}, \mathrlap{1}\hphantom{0} & \mapsto & 1 \\
\mathrlap{1}\hphantom{0}, \mathrlap{0}\hphantom{0} & \mapsto & 1  \\
\mathrlap{1}\hphantom{0}, \mathrlap{1}\hphantom{0} & \mapsto & 0  \\
\end{array} \qquad \qquad
\begin{array}{lcr}
\mathrlap{X}\hphantom{X}, \mathrlap{X}\hphantom{X} & \mapsto & X \hphantom{.}\\
\mathrlap{X}\hphantom{Y}, \mathrlap{Y}\hphantom{Y} & \mapsto & Y \hphantom{.}\\
\mathrlap{Y}\hphantom{X}, \mathrlap{X}\hphantom{Y} & \mapsto & Y \hphantom{.}\\
\mathrlap{Y}\hphantom{Y}, \mathrlap{Y}\hphantom{X} & \mapsto & X .
\end{array}
\]
The output of the \textsc{OR} function is read off from the \textsc{XOR} of the three outcome bits.

\footnotetext{Anders-Browne in fact implemented NAND, using a different version of the GHZ state. Implementing any non-linear Boolean function in this setting suffices to achieve universality.}

We draw attention to the following two remarks.
\begin{itemize}
\item
This example illustrates causality that is purely employed by the Experimenter.
From Nature's point of view, it is just the standard (``causally flat'') GHZ construction.
\item The above describes a simplified ``one-shot'' implementation of a single \textsc{OR} gate.
To represent general logical circuits with embedded \textsc{OR} gates, using this construction as a building block,
really requires (classically computed) feedforward of measurement settings.
This means that there is full adaptivity at work, \ie dependence of measurement choices on prior measurement outcomes.
\end{itemize}

\section{Game semantics of causality}

We conceptualise the dual nature of causality as a two-person game, played between Experimenter and Nature:
\begin{itemize}
\item Experimenter’s moves are measurements to be performed;
\item Nature’s moves are the outcomes.
\end{itemize}
By formalising this, we develop a theory of causal contextuality that recovers: 
\begin{itemize}
\item the usual theory of contextuality in the ``flat'' case,
\item the Gogioso--Pinzani theory of non-locality in a causal background,
\item MBQC with adaptive computation,
\item classical causal networks,
\end{itemize}
as special cases, and more.

\subsection{Measurement scenarios}
We begin by briefly reviewing some basic ingredients of the sheaf-theoretic formulation of contextuality. For further details, see e.g.~\cite{abramsky2011sheaf}.

A (flat) measurement scenario is a pair $(X, O)$, where:
\begin{itemize}
\item $X$ is a set of \emph{measurements}.
\item $O = \{ O_x \}_{x \in X}$ is the set of possible \emph{outcomes} for each measurement.
\end{itemize}

An \emph{event} has the form $(x,o)$, where $x \in X$ and $o \in O_x$. It corresponds to the measurement $x$ being performed, with outcome $o$ being observed. 

Given a set of events $s$,  its domain is the set of measurements performed:
\[ \dom(s) \defeq \pi_1 s = \{ x \mid \exists o. \, (x,o) \in s \} . \]
We say that $s$ is \emph{consistent} if $(x,y), (x, y') \in s$ implies $y = y'$.
In this case, $s$ defines a function from the measurements in its domain to outcomes.

A consistent set of events is a \emph{section}.
We define the \emph{event sheaf} $\ES$ over sets of measurements: for each set $U \subseteq X$ of measurements, $\ES(U)$ is the set of sections whose domain is $U$; when $U \subseteq V$, there is a restriction map $\ES(V) \to \ES(U)$.
The functoriality of these restriction maps formalises the no-disturbance condition, or ``generalised no-signalling'', at the level of deterministic models. Generalised no-signalling of probabilistic (or possibilistic) models will then follow automatically when we compose with the appropriate distribution monad, cf.~\cite{abramsky2011sheaf}.

The sheaf property of the event sheaf -- that compatible families of local sections glue together to yield unique global sections -- corresponds to the fact that \emph{deterministic models are non-contextual}.\footnote{Note that if we drop no-signalling, as in the CbD approach, this no longer holds.}
When we pass to distributions over the event sheaf, 
%which may be probabilistic or possibilistic, 
the sheaf property no longer holds, and this is exactly how contextuality arises. More precisely, we extend the measurement scenario to a \emph{contextuality scenario} by specifying a cover of $X$; a failure of the sheaf property with respect to this cover constitutes a witness to contextuality.

Our general strategy to accommodate causality is to modify the definition of the event sheaf. After this, we essentially follow the same script as above to give an account of contextuality in the causal setting. A similar procedure is followed in \cite{gogioso2021sheaf,gogioso2023geometry}.

\subsection{Causal measurement scenarios}

A causal measurement scenario is a tuple $M=(X, O, {\vdash})$, where the additional ingredient is an \emph{enabling relation}
that expresses causal constraints.
The intended interpretation of $s \vdash x$, where $s \in \bigcup_{U \subseteq X}\ES(U)$ is a consistent set of events and $x \in X$ a measurement,
is that it is possible to perform $x$ after the events in $s$ have occurred.
%We require that if $s \vdash x$, then $\dom(s) \cup \{x\} \subseteq C$ for some $C \in \Co
Note that this constraint refers to the measurement outcomes as well as the measurements that have been performed.
This allows adaptive behaviours to be described.

Given such a causal measurement scenario $M$, we use it to generate a set of \emph{histories}. A history is a set of events that can happen in a causally consistent fashion. We associate each measurement $x$ with a unique event occurrence, so histories are required to be consistent.

To formalise this, we first define the \emph{accessibility relation} $\acc$ between consistent sets of events $s$ and measurements $x$: $s \acc x$ if and only if $x \not\in \dom(s)$ and for some $t\subseteq s$, $t \vdash x$. The intuition is that $x$ may be performed if the events in $s$ have occurred. 
Now, $\Hist(M)$, the set of histories over $M$, is defined inductively as the least family $H$ of consistent sets of events
which contains the empty set and is closed under accessibility, meaning that if $s \in H$ and $s \acc x$,
then for all $o \in O_x$,  $s \cup \{ (x,o)\} \in H$. Note that if a measurement can be performed, then any of its outcomes may occur, forming a valid history.

We can give a more explicit description of $\Hist(M)$ as a least fixed point. We define an increasing family of sets of histories $\{ H_k \}$ inductively:
\begin{align*}
    H_0 & \defeq  \{ \es \} \\
    H_{k+1} & \defeq  H_k \; \cup \; \{ s \cup \{ (x,o) \} \mid s \in H_{k}, s \acc x, o \in O_x   \}.
\end{align*}
If $X$ is finite, then for some $k$ we have $H_k = H_{k+1}$, and $\Hist(M) = H_k$ for the least such $k$.
%\footnote{If $X$ is countably infinite, the least fixpoint will be the union of the $H_k$.}

\subsection{Strategies}

We regard a causal measurement scenario as specifying a game between Experimenter and Nature. Events $(x,o)$ correspond to the Experimenter choosing a measurement $x$, and Nature responding with outcome $o$. The histories correspond to the \emph{plays} or runs of the game.

Given this interpretation, we consider the notion of strategy. We focus first on the player Nature, whose strategies one may think of as hidden variables. This is in line with the usual discussion of contextuality, where the experimenter may freely choose which measurements to perform and such choices are beyond the scope of analysis.
Later, in \cref{sec:experimenterstrategies-adaptive}, we also consider the parallel notion of strategy for Experimenter, which can express adaptivity.

We define a \emph{strategy for Nature} over the game $M$ as a set of histories $\sg \subseteq \Hist(M)$ satisfying the following conditions:
\begin{itemize}
\item $\sg$ is downwards closed: if $s, t \in \Hist(M)$ and $s \subseteq t \in \sg$, then $s \in \sg$.
\item $\sg$ is deterministic and total: if $s \in \sg$ and $s \acc x$, then there is a unique $o \in O_x$ such that $s \cup \{ (x,o) \} \in \sg$.
\end{itemize}
Thus at any position $s$ reachable under the strategy $\sg$, the strategy determines a unique response to any measurement that can be chosen by the Experimenter.

We note an important property of strategies.
\begin{proposition}[Monotonicity]
If $s, t \in \sg$, $s \subseteq t$, and $s \acc x$, then 
\[ s \cup \{ (x,o) \} \in \sg \IMP t \cup \{ (x,o) \} \in \sg . \]
\end{proposition}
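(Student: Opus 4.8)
The plan is to prove the statement by induction on the number of events in $t \setminus s$, reducing the general case to a single accessibility step. The enabling observation, which I would record first, is that accessibility is \emph{monotone}: if $s \acc x$, $s \subseteq s'$, and $x \notin \dom(s')$, then $s' \acc x$. This is immediate from the definition, since $s \acc x$ requires only \emph{some} subset $t' \subseteq s$ with $t' \vdash x$, and such a $t'$ remains a subset of the larger $s'$. This monotonicity is what ultimately lets causal constraints propagate along the inclusion $s \subseteq t$.

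The first real step is a refinement lemma: any inclusion $s \subseteq t$ of histories factors as a chain $s = u_0 \subseteq u_1 \subseteq \cdots \subseteq u_m = t$ in which each $u_{i+1} = u_i \cup \{(y_i,c_i)\}$ arises by a single accessibility step $u_i \acc y_i$. I would build this chain \emph{upwards} from $s$ rather than by peeling events off the top of $t$: because of causal dependencies, an arbitrarily chosen event of $t$ need not be removable without destroying the enabling of later events. Concretely, I would take a building sequence $\es = t_0 \subset \cdots \subset t_n = t$ witnessing that $t$ is a history and consider the sets $s \cup t_i$. Each $s \cup t_{i+1}$ either equals $s \cup t_i$ or adds the single event $t_{i+1}\setminus t_i$; in the latter case, monotonicity of accessibility (together with consistency of $t$, which keeps the newly added measurement out of $\dom(s)$, hence out of the current domain) shows the addition is a legal accessibility step. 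Downward closure of $\sg$ then places every $u_i$ in $\sg$.

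Granting the refinement, the heart of the matter is the single-step case: if $u \in \sg$, $u \acc x$, $u \cup \{(x,o)\} \in \sg$, and $u \cup \{(y,c)\} \in \sg$ with $y \neq x$, then $u \cup \{(y,c),(x,o)\} \in \sg$. Here I would argue that $u \cup \{(y,c)\} \acc x$ (again by monotonicity, using $y \neq x$ to keep $x$ out of the domain), invoke totality of $\sg$ to obtain the unique outcome $o^\ast$ with $u \cup \{(y,c),(x,o^\ast)\} \in \sg$, restrict this section down to the history $u \cup \{(x,o^\ast)\}$ (a member of $\sg$ by downward closure), and finally appeal to determinism of $\sg$ at $u$ to conclude $o^\ast = o$.

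The one genuinely delicate point, which I expect to be the main obstacle, is that the conclusion $t \cup \{(x,o)\}$ must actually be a consistent history, \ie that we have not already performed $x$ in $t$ with a different outcome. I would dispatch this as a separate case \emph{inside} the induction: if $x \in \dom(t)$, locate along the chain the step $u_i \cup \{(x,o'')\}$ that first introduces $x$, apply the inductive hypothesis to the shorter inclusion $s \subseteq u_i$ to get $u_i \cup \{(x,o)\} \in \sg$, and use determinism at $u_i$ to force $o'' = o$, whence $(x,o) \in t$ and the conclusion is trivial. When $x \notin \dom(t)$, the refinement-plus-single-step argument goes through unobstructed.
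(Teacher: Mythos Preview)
Your argument is correct, but the refinement lemma and the induction on $|t \setminus s|$ are unnecessary scaffolding around what is already the complete proof. The reasoning you give for the single-step case is exactly the paper's argument, and nothing in it uses that $t \setminus s$ consists of a single event: from $t \acc x$ (your monotonicity-of-accessibility observation, once $x \notin \dom(t)$), totality yields $t \cup \{(x,o')\} \in \sg$ for some $o'$; since $s \acc x$, the set $s \cup \{(x,o')\}$ is a history and is contained in $t \cup \{(x,o')\}$, so down-closure places it in $\sg$; determinism at $s$ then forces $o' = o$. The paper's proof is precisely these three steps. Your chain factorisation and inductive walk along it are valid but add nothing, because the down-closure step jumps from $t \cup \{(x,o')\}$ all the way back to $s \cup \{(x,o')\}$ in one go.

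You do treat the case $x \in \dom(t)$ explicitly, which the paper passes over (it simply writes ``since $t \acc x$''). Your handling via the inductive hypothesis along the chain works, but this case too admits a one-line direct argument: if $(x,o'') \in t$ then $s \cup \{(x,o'')\} \subseteq t$ is a history (from $s \acc x$), hence in $\sg$ by down-closure, and determinism at $s$ gives $o'' = o$, so $t \cup \{(x,o)\} = t \in \sg$.
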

\begin{proof}
Under the given assumptions, since $t \acc x$, we must have $t \cup \{ (x,o') \} \in \sg$ for some $o' \in O_x$. Since $s \acc x$, we have 
that $s \cup \{ (x,o') \}$ is a history (\ie in $\Hist(M)$), 
and by down-closure, $s \cup \{ (x,o') \} \in \sg$. Since $\sg$ is deterministic, we must have $o = o'$.
\end{proof}

Monotonicity says that
the outcomes for a measurement $x$ under strategy $\sg$ are determined at the minimal histories at which $x$ can occur. This still leaves open the possibility of $\sg$ assigning different outcomes to $x$ relative to incomparable causal pasts.

We note another useful property, which follows immediately from totality and determinism.
\begin{proposition}[Maximality]
If $\sg$, $\tau$ are strategies with $\sg \subseteq \tau$, then $\sg = \tau$.
\end{proposition}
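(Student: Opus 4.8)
The plan is to establish the nontrivial inclusion $\tau \subseteq \sg$; combined with the hypothesis $\sg \subseteq \tau$ this gives $\sg = \tau$. I would argue by induction on the number of events in a history, exploiting the fixed-point description of $\Hist(M)$ via the stages $\{H_k\}$. The point is that a history with $n$ events first appears at stage $H_n$, having been formed by adjoining a single event to a history with $n-1$ events, so that induction on event-count tracks exactly this inductive generation.

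For the base case, both strategies contain the empty history $\es$ (being downwards closed and nonempty), so $\es \in \sg$. For the inductive step, take $t \in \tau$ with $t \neq \es$. The key structural observation is that any nonempty history decomposes as $t = s \cup \{(x,o)\}$ with $s \in \Hist(M)$, $s \acc x$ (hence $x \notin \dom(s)$), and $o \in O_x$, where $s$ has exactly one fewer event than $t$; this is read off from the least stage at which $t$ occurs. By down-closure of $\tau$ applied to $s \subseteq t \in \tau$, we get $s \in \tau$, and by the induction hypothesis (valid since $s$ has fewer events) we get $s \in \sg$.

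It remains to close the loop using totality and determinism. Since $s \in \sg$ and $s \acc x$, totality of $\sg$ supplies a \emph{unique} $o' \in O_x$ with $s \cup \{(x,o')\} \in \sg$, and as $\sg \subseteq \tau$ this element also lies in $\tau$. But now $s \in \tau$, $s \acc x$, and both $s \cup \{(x,o)\} = t$ and $s \cup \{(x,o')\}$ belong to $\tau$; determinism of $\tau$ then forces $o = o'$. Hence $t = s \cup \{(x,o')\} \in \sg$, completing the induction and the proof.

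I expect no serious obstacle here, consistent with the remark that the result follows immediately from totality and determinism: once one has the decomposition of a nonempty history into a strictly smaller history plus one accessible event, the three strategy conditions combine mechanically. The only genuine care-points are that decomposition step, which is precisely what the construction of $\{H_k\}$ guarantees, and the convention that strategies are nonempty (equivalently, contain $\es$); without the latter the empty set would count as a strategy contained in every strategy, and maximality would fail.
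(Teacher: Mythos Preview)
Your proof is correct and matches the paper's intended approach: the paper gives no explicit argument, merely remarking that the result ``follows immediately from totality and determinism,'' and your induction on history size is precisely the unpacking of that remark. Your flagging of the nonemptiness convention is apt, since the paper's definition does not state it explicitly even though the proposition requires it.
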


\subsection{The presheaf of strategies}

Given a causal measurement scenario $M = (X,O,{\vdash})$ and a set of measurements $U \subseteq X$, we define $M_U$, the restriction of $M$ to $U$, as the causal measurement scenario $(U, \{ O_x \}_{x \in U}, {\vdash_{U}})$, where $s \vdash_U x$ iff $s \vdash x$ and $\dom(s) \cup \{ x \} \subseteq U$.
Note that $M_X = M$.

\begin{proposition}
If $U \subseteq V$, then $\Hist(M_U)$ is a down-closed subset of $\Hist(M_V)$ under set inclusion.
\end{proposition}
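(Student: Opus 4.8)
The plan is to break the statement into its two constituent assertions: the inclusion $\Hist(M_U) \subseteq \Hist(M_V)$, and down-closure, \ie that if $t \in \Hist(M_V)$ with $t \subseteq s$ for some $s \in \Hist(M_U)$, then $t \in \Hist(M_U)$. Both turn on comparing the accessibility relations $\acc_U$ and $\acc_V$ of the two scenarios, together with the trivial but essential fact that $\dom$ is monotone: $r \subseteq s$ entails $\dom(r) \subseteq \dom(s)$. Observe at the outset that, since $U \subseteq V$, any $\vdash_U$-enabling is also a $\vdash_V$-enabling (the extra clause $\dom(r) \cup \{x\} \subseteq U$ only strengthens $\dom(r) \cup \{x\} \subseteq V$), so $s \acc_U x$ implies $s \acc_V x$.

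The heart of the proof is the following lemma: if $t \in \Hist(M_V)$ and $\dom(t) \subseteq U$, then $t \in \Hist(M_U)$. I would prove it by induction on $|t|$ (equivalently, on the least stage $k$ with $t \in H_k$ in the generation of $\Hist(M_V)$). The base case $t = \es$ is immediate. For the step, take a generating decomposition $t = s \cup \{ (x,o) \}$ with $s \in \Hist(M_V)$, $s \acc_V x$, and $x \notin \dom(s)$. From $s \subseteq t$ we get $\dom(s) \subseteq \dom(t) \subseteq U$, so the inductive hypothesis yields $s \in \Hist(M_U)$, and $x \in \dom(t) \subseteq U$. It remains to upgrade the $M_V$-accessibility to $M_U$-accessibility: the witness $r \subseteq s$ with $r \vdash x$ and $\dom(r) \cup \{x\} \subseteq V$ already satisfies $\dom(r) \subseteq \dom(s) \subseteq U$ and $x \in U$, hence $r \vdash_U x$ and $s \acc_U x$. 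Therefore $t = s \cup \{ (x,o) \} \in \Hist(M_U)$, closing the induction.

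Granting the lemma, both assertions follow cleanly. The inclusion $\Hist(M_U) \subseteq \Hist(M_V)$ is a dual and simpler induction on the generation of $\Hist(M_U)$: $\es \in \Hist(M_V)$, and whenever a history $s$ of $M_U$ is already known to lie in $\Hist(M_V)$ and $s \acc_U x$, the opening observation gives $s \acc_V x$, so $s \cup \{ (x,o) \} \in \Hist(M_V)$. For down-closure, let $s \in \Hist(M_U)$ and $t \in \Hist(M_V)$ with $t \subseteq s$. Every history of $M_U$ has domain inside $U$, so $\dom(t) \subseteq \dom(s) \subseteq U$, and the lemma gives $t \in \Hist(M_U)$.

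I expect the only real obstacle to be the inductive step of the lemma: checking that an enabling witness used while building $t$ inside the larger scenario $M_V$ is still a legal witness inside $M_U$. This is precisely where the definition of $\vdash_U$ pays off, since it adds to $\vdash$ only the domain side condition $\dom(r) \cup \{x\} \subseteq U$ on the enabling set $r$, and this condition is downward absolute: once $\dom(t) \subseteq U$, every subset $r$ of every prefix of $t$ inherits $\dom(r) \subseteq U$ for free. The one point requiring care is to run the induction along an actual generation sequence of $t$, so that each intermediate set is genuinely a subset of $t$ and therefore enjoys the domain bound; this is what licenses reusing the $M_V$-witnesses in $M_U$.
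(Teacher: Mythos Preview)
Your proof is correct. The paper states this proposition without proof, so there is nothing to compare against; your argument fills the gap cleanly. The key lemma (that any $t \in \Hist(M_V)$ with $\dom(t) \subseteq U$ already lies in $\Hist(M_U)$) is exactly the right intermediate step, and your observation that the enabling witness $r \subseteq s$ for $s \acc_V x$ automatically satisfies $\dom(r) \cup \{x\} \subseteq U$ once $\dom(t) \subseteq U$ is the crux. Both inductions are well-founded as stated, and the deduction of the inclusion and down-closure from the lemma is immediate.
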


Given a strategy $\sg$ over $M_V$, and $U \subseteq V$, we define $\sg |_U$, the restriction of $\sg$ to $U$, as the intersection $\sg |_U \defeq \sg \cap \Hist(M_U)$.
\begin{proposition}
If $\sg$ is a strategy over $M_V$ and $U \subseteq V$, then $\sg |_U$ is a strategy over $M_U$.
\end{proposition}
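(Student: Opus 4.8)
The plan is to verify directly the two defining clauses of a strategy for the set $\sg|_U = \sg \cap \Hist(M_U)$, namely downward closure and determinism-plus-totality, treating $\sg|_U$ as a candidate strategy over $M_U$. That $\sg|_U \subseteq \Hist(M_U)$ is immediate from its definition as an intersection, so only the two closure conditions require work. The two facts I expect to do all the heavy lifting are, first, the preceding proposition that $\Hist(M_U)$ is a down-closed subset of $\Hist(M_V)$, and second, the observation that accessibility in $M_U$ implies accessibility in $M_V$.

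For downward closure, I would take $s, t \in \Hist(M_U)$ with $s \subseteq t$ and $t \in \sg|_U$, and show $s \in \sg|_U$. Since $t \in \sg|_U$ we have $t \in \sg$; since $s \in \Hist(M_U)$, the preceding proposition gives $s \in \Hist(M_V)$; and then down-closure of $\sg$ inside $\Hist(M_V)$ yields $s \in \sg$. As $s$ already lies in $\Hist(M_U)$, we conclude $s \in \sg \cap \Hist(M_U) = \sg|_U$.

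For determinism and totality, the crucial preliminary step is to record that, for $s \in \Hist(M_U)$, accessibility $s \acc x$ computed in $M_U$ implies accessibility $s \acc x$ computed in $M_V$. Unfolding the definition of $\vdash_U$, any witness $t \subseteq s$ with $t \vdash_U x$ satisfies $t \vdash x$ together with $\dom(t) \cup \{x\} \subseteq U \subseteq V$, so the same $t$ witnesses accessibility in $M_V$; and $x \notin \dom(s)$ is a condition independent of the ambient scenario. Granting this, suppose $s \in \sg|_U$ and $s$ accesses $x$ in $M_U$. Then $s$ accesses $x$ in $M_V$, and since $s \in \sg$ and $\sg$ is total and deterministic over $M_V$, there is a unique $o \in O_x$ with $s \cup \{(x,o)\} \in \sg$. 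To land this outcome inside $\sg|_U$, I would invoke closure of $\Hist(M_U)$ under $M_U$-accessibility: from $s \in \Hist(M_U)$ and $s \acc x$ in $M_U$ we get $s \cup \{(x,o)\} \in \Hist(M_U)$, hence $s \cup \{(x,o)\} \in \sg \cap \Hist(M_U) = \sg|_U$. Uniqueness transfers for free, since $\sg|_U \subseteq \sg$ and the outcome was already unique in $\sg$.

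The one point that genuinely needs care---the main, if modest, obstacle---is the directionality of the two accessibility relations. Totality of $\sg|_U$ requires responding exactly to the moves available in $M_U$, and this works precisely because $M_U$-accessibility entails $M_V$-accessibility, so the ambient strategy $\sg$ is always obliged to answer. The matching subtlety is ensuring that the answer $\sg$ gives is itself an $M_U$-history and not merely an $M_V$-history; this is exactly where the inductive closure of $\Hist(M_U)$ under accessibility is used. Everything else is bookkeeping against the definitions.
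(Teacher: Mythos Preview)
Your proposal is correct and follows essentially the same approach as the paper: inherit down-closure from $\sg$, then for determinism and totality observe that $s \acc_U x$ implies $s \acc_V x$, invoke the unique response $o$ guaranteed by $\sg$, and note that $s \cup \{(x,o)\} \in \Hist(M_U)$ since $x \in U$. The paper's proof is much terser (it merely asserts that down-closure is inherited and does not spell out the accessibility implication), whereas you make explicit the role of the preceding proposition and the inductive closure of $\Hist(M_U)$, but the underlying argument is the same.
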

\begin{proof}
The restriction $\sg |_U$ inherits down-closure from $\sg$.
For the second condition, if $s \in \sg |_U$ and $s \acc_U x$, then $s \in \sg$ and $s \acc_V x$. So, there is a unique $o \in O_x$ such that $s \cup \{ (x,o) \} \in \sg$.
But since $x \in U$, we have $s \cup \{ (x,o) \} \in \Hist(M_U)$, and so $s \cup \{ (x,o) \} \in \sg |_U$.
\end{proof}

Given a causal measurement scenario $M = (X,O,{\vdash})$, we can now define a presheaf 
\[ \Gamma : \pow(X)^{\op} \to \Set \]
of strategies over $M$.
For each $U \subseteq X$, $\Gamma(U)$ is the set of  strategies for $M_U$. 
Given $U \subseteq V$, the restriction map $\Gamma(U \subseteq V) : \Gamma(V) \to \Gamma(U)$ is given by $\sg \mapsto  \sg |_U$. 

The following is immediate:
\begin{proposition}
$\Gamma$ is a presheaf.
\end{proposition}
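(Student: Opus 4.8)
The plan is to verify the two functoriality axioms for $\Gamma$: preservation of identities and of composition. Well-definedness of the object assignment $U \mapsto \Gamma(U)$ is immediate, and well-definedness of each restriction map $\sg \mapsto \sg|_U$ — that is, the fact that $\sg|_U$ is genuinely a strategy over $M_U$ whenever $\sg$ is a strategy over $M_V$ and $U \subseteq V$ — has already been established in the preceding proposition. So the only remaining work is functoriality itself, and since the restriction maps are defined by intersection with $\Hist(M_U)$, both axioms reduce to elementary set-theoretic identities.

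First I would check preservation of identities. For $U \subseteq X$ and a strategy $\sg$ over $M_U$, the restriction is $\sg|_U = \sg \cap \Hist(M_U)$. Since every strategy is by definition a subset of the relevant set of histories, we have $\sg \subseteq \Hist(M_U)$, whence $\sg \cap \Hist(M_U) = \sg$. Thus $\Gamma(U \subseteq U) = \mathrm{id}_{\Gamma(U)}$.

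Next I would check preservation of composition. Given $U \subseteq V \subseteq W$ and a strategy $\sg$ over $M_W$, I would compute $(\sg|_V)|_U$ directly:
\[ (\sg|_V)|_U = (\sg \cap \Hist(M_V)) \cap \Hist(M_U) = \sg \cap (\Hist(M_V) \cap \Hist(M_U)). \]
The crucial input is the earlier proposition stating that $\Hist(M_U)$ is a down-closed subset of $\Hist(M_V)$ for $U \subseteq V$; in particular $\Hist(M_U) \subseteq \Hist(M_V)$, so $\Hist(M_V) \cap \Hist(M_U) = \Hist(M_U)$. This collapses the expression to $\sg \cap \Hist(M_U) = \sg|_U$, which is exactly $\Gamma(U \subseteq W)(\sg)$, as required.

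This proof presents no real obstacle: the content is entirely carried by the earlier results (well-definedness of restriction, together with the inclusion $\Hist(M_U) \subseteq \Hist(M_V)$), and what remains is the routine observation that restricting by intersection with nested history sets is associative and bottoms out correctly. If anything warrants a moment's care, it is simply confirming that the relevant history sets are genuinely nested along the chain $U \subseteq V \subseteq W$ — which is precisely what the cited proposition supplies.
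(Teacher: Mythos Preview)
Your proof is correct and matches the paper's approach: the paper simply declares the result ``immediate'' without giving an argument, and what you have written is precisely the routine verification of functoriality that justifies that word, relying on the two preceding propositions exactly as intended.
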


\subsection{Historical note}

Causal measurement scenarios are a renaming and repurposing of Kahn--Plotkin information matrices \cite{kahn1993concrete}, which were introduced \textit{circa}~1975 to represent concrete domains.\footnote{For a historical perspective, see \cite{brookes1993historical}.}

We have changed the terminology to reflect the intuitions and applications motivating the present paper:

\vspace{.1in}
\begin{center}
\begin{tabular}{l|l}
Kahn--Plotkin & Here \\ \hline
information matrix & causal measurement scenario \\
cell & measurement \\
value & outcome \\
decision & event \\
configuration & history \\
\end{tabular}
\end{center}
\vspace{.1in}
The interpretation of causal measurement scenarios as Experimenter--Nature games, the notion of strategy, and the presheaf of strategies, are all new to the present paper.

\section{Causal contextuality}
Our plan now is to follow the script from \cite{abramsky2011sheaf}, replacing the event sheaf $\Event$ by the presheaf of strategies $\Gamma$.
Thus local sections are replaced by strategies, whose assignments of outcomes to measurements are sensitive to the previous history of the game.
%We will return later to the question of whether $\Gamma$ or some variant thereof is a sheaf.

A \emph{causal contexuality scenario} is a structure $(M, \Cover)$, where $M = (X, O, {\vdash})$ is a causal measurement scenario and $\Cover$ is a cover of $X$, \ie a family $\Cover = \{ C_i \}_{i \in I}$ of subsets of measurements $C_i \subseteq X$ satisfying $\bigcup\Cover = \bigcup_{i \in I} C_i = X$.
We work with the presheaf $\Gamma$ of strategies over $M$, as described in the previous section.

Recall the distribution monad $\Dist_R$ from \cite{abramsky2011sheaf}, where $R$ is a semiring.
When $R$ is the non-negative reals, it yields the usual discrete probability distributions.
We construct the presheaf $\Dist_R \Gamma$, obtained by composing the endofunctor part of the monad with the sheaf of strategies $\Gamma$. 

An \emph{empirical model} on the scenario $(M, \Cover)$ is a compatible family for the presheaf $\Dist_R \Gamma$ over the cover $\Cover = \{ C_i \}_{i \in I}$.
That is, it is a family $\{ e_i \}_{i \in I}$, where $e_i \in \Dist_R \Gamma (C_i)$,
subject to the compatibility conditions: for all $i, j \in I$, $e_i |_{C_i \cap C_j} = e_j |_{C_i \cap C_j}$.
Each distribution $e_i$ assigns probabilities to the strategies over $M_{C_i}$, \ie to those strategies over $M$ that only perform measurements drawn from the context $C_i$. As usual, the compatibility conditions require that the marginal distributions agree.
This follows the definition of empirical model in \cite{abramsky2011sheaf}, replacing the event sheaf by the presheaf of strategies.

The empirical model is \emph{causally non-contextual} if this compatible family extends to a global section of the presheaf $\Dist_R \Gamma$, \ie if there is a distribution $d \in \Dist_R \Gamma (X)$ such that, for all $i \in I$, $d |_{C_i} = e_i$.

%%%%    
%%%%    An \emph{empirical model} on the scenario $(M, \Cover)$ is a compatible family for the presheaf $\Dist_R \Gamma$ over the cover $\Cover$.
%%%%    That is, it is a family $\{ e_C \}_{C \in \Cover}$, where $e_C \in \Dist_R \Gamma (C)$,
%%%%    subject to the compatibility conditions: for all $C, D \in \Cover$, $e_C |_{C \cap D} = e_D |_{C \cap D}$.
%%%%    Each distribution $e_C$ assigns probabilities to the strategies over $M_{C}$, \ie to those strategies over $M$ that only perform measurements drawn from the maximal context $C$. As usual, the compatibility conditions require that the marginal distributions agree.
%%%%    This follows the definition of empirical model in \cite{abramsky2011sheaf}, replacing the event sheaf by the presheaf of strategies.
%%%%    
%%%%    The empirical model is \emph{causally non-contextual} if this compatible family extends to a global section of the presheaf $\Dist_R \Gamma$, \ie if there is a distribution $d \in \Dist_R \Gamma (X)$ such that, for all $C \in \Cover$, $d |_{C} = e_C$.
%%%%    

If a causal contextuality scenario is finite, then so is the set of histories and therefore that of strategies. 
The causally non-contextual models thus form a convex polytope, the convex hull of the empirical models on $(M,\Cover)$ corresponding to deterministic strategies $\sg \in \Gamma(X)$.
This is in keeping with the usual setup of ``flat'' non-locality and contextuality (\ie without causality), where such classical polytopes are studied.
The classicality of a given model, \ie membership in this polytope, can be checked by linear programming;
and this also suggests a generalisation of the contextual fraction \cite{abramsky2017contextual} to the causal setting.

Similarly, causal contextuality is witnessed by violations of the linear inequalities defining the facets of the polytope.
An open question is to find a logical characterisation of such inequalities in the spirit of ``logical Bell inequalities'' \cite{abramsky2012logical}.

\section{Special cases}

To check that these notions make sense, we look at two special cases: flat scenarios and Gogioso--Pinzani scenarios.

\subsection{Flat scenarios}

A contextuality scenario from \cite{abramsky2011sheaf} is $(X, O, \Cover)$. We define the trivial enabling relation where all measurements are initially enabled: $\es \vdash x$ for all $x \in X$. This yields a causal measurement scenario $(M, \Cover)$, where $M = (X,O,{\vdash})$.

For any set of measurements $U \subseteq X$, the histories over $M_U$ have support contained in $U$.
Using the monotonicity property and the fact that all measurements are enabled by $\es$,
any strategy $\sg$ in $\Gamma(U)$ assigns the same outcome to each measurement across all its histories.
Hence, it will correspond to a section in $\Event(U) = \prod_{x \in U} O_x$. In fact, these will be in bijective correspondence.

%%%   For any $i \in I$, the plays on $M_{C_i}$ will have support contained in $C_i$.  Using the monotonicity property, a strategy $\sg$ in $\Gamma(C_i)$ will assign the same outcomes to the measurements in all such plays, and hence will correspond to a local section in $\prod_{x \in C_i} O_x$. In fact, these will be in bijective correspondence.
%%%   
%%%   Arguing similarly, strategies  in $\Gamma(X)$ will correspond bijectively to global assignments in $\prod_{x \in X} O_x$.

Because of this bijective correspondence between $\Gamma$ and $\Event$, we see that the notions of empirical model, global section, and contextuality defined for the game-based scenario coincide with the usual notions in this case.

As this example illustrates, the restrictions on which measurements can be performed together are imposed by the cover, not by the causal structure. 

\subsection{GP scenarios}

In recent work, Stefano Gogioso and Nicola Pinzani studied a causal refinement of the sheaf-theoretic approach to non-locality over Bell scenarios \cite{gogioso2021sheaf}.

A GP scenario is given by $((\Omega, {\leq}), \{ \Iom \}_{\om \in \Om}, \{ \Oom \}_{\om \in \Om})$, where:
\begin{itemize}
\item $\Omega$ is a set of sites or agents (Alice, Bob, etc.), with a causal ordering.
\item $\Iom$ is the set of inputs (or measurement settings) at $\om$.
\item $\Oom$ is the set of outputs (or measurement outcomes) at $\om$.
\end{itemize}

Given such a scenario, we define a causal measurement scenario  $M = (X,O,{\vdash})$.
This mirrors the usual encoding of Bell non-locality scenarios as contextuality scenarios.
First, we set:
\begin{itemize}
\item $X \defeq \sum_{\om \in \Om} \prod_{\om' \leq \om}\Iomp = \{ (\om, \vi) \mid \om \in \Om, \vi = \{\vi(\om') \in \Iomp\}_{\om' \leq \om}\}$;
\item $O_{(\om,i)} \defeq \Oom$.
\end{itemize}
Given a set of events
\[
s = \{ \, (\,(\omega_1,\vi_1)\,,o_1),\, \ldots ,\, (\,(\om_n, \vi_n)\,, o_n) \,\}
\]
and a measurement $(\om, \vi) \in X$, we define
$s \vdash (\om, \vi)$ if and only if
the support of $s$ has a measurement for each site strictly preceding $\omega$, \ie
 $\{ \omega_1, \ldots, \omega_n \} = \{ \om' \in \Om \mid \om' < \om \}$, and moreover $\vi(\upsilon) = \vi_j(\upsilon)$ for all $\upsilon \leq \om_j$. The vector $\vi$ thus encodes all prior choices, as Nature's strategies are allowed to depend on them.
So, a measurement $(\om, \vi)$ can only be played after a measurement from each site in the causal past of $\om$ has been played.
Consequently, the support of any history is a set of measurements per site for some lower subset $\lambda \subseteq \Om$. 
This corresponds to the usual notion of context for Bell scenarios, refined to ensure that such contexts are ``causally secured''.

We consider a simple example to illustrate the comparison between $\Gamma$ defined over $(X, O, {\vdash})$, and the ``sheaf of sections'' from \cite{gogioso2021sheaf}.

We take $\Om$ to be the 2-chain $\om_1 < \om_2$. 
This is a variation on a standard bipartite Bell--CHSH type scenario, with Alice causally preceding Bob, and hence allowed to signal to Bob.
We take the standard Bell scenario cover, where the maximal contexts correspond to choosing one measurement per site, and focus our analysis on the contexts below the cover\footnotemark

\footnotetext{The equivalence between sections of $\Gamma$ and those of the presheaf from \cite{gogioso2021sheaf} actually extends more generally to all subsets of measurements, but this is sufficient to illustrate our main point.}

Now consider a strategy $\sg \in \Gamma(X)$.
The non-empty histories in $M$ which are compatible in the standard Bell cover  have the form
\[\{ \, (\,\om_1,\{ \omega_1 \mapsto z_1 \} \rangle)\,, o_1) \, \} \quad\text{ or }\quad \{ \, (\,(\om_1,\{ \omega_1 \mapsto z_1 \})\,, o_1),\, (\,(\om_2,\{ \omega_1 \mapsto z_1, \omega_2 \mapsto z_2\})\,, o_2) \, \} ,\]
where $z_i \in \{ x,y \}$, $o_i \in \{ 0,1\}$, $i=1,2$. 
Using monotonicity, the strategy $\sg$ assigns a unique $o_1$ for each $(\om_1,z_1)$ and a unique $o_2$ for each $(\om_1,z_1)$ and $(\om_2,z_2)$.
Thus $\sg$ determines a pair of functions of type
\[(I_{\om_1} \to O_{\om_1}) \quad\times\quad (I_{\om_1} \times I_{\om_2} \to O_{\om_2}). \]
This accords with the description given in \cite{gogioso2021sheaf}; see in particular the discussion in Section~5.
It extends to an equivalence between $\Gamma$ and the sheaf of sections of \cite{gogioso2021sheaf}.

Thus, if we take the standard Bell cover we obtain the same empirical models and notion of contextuality as in \cite{gogioso2021sheaf}.
In an extended version of the present paper, we show that
this analysis carries over to general GP scenarios. Hence, we recover the Gogioso--Pinzani theory as a special case of our framework.

\section{The sheaf property for the strategy presheaf}

The strategy presheaf $\Gamma$ plays the role in our causal theory of the event sheaf $\Event$ in \cite{abramsky2011sheaf}.
The sheaf property of $\Event$ has some conceptual significance since it shows that for deterministic models local consistency implies global consistency. It is only when we introduce distributions, whether probabilistic or possibilistic, that the sheaf property fails and contextuality arises.
This raises the question of whether $\Gamma$ is also a sheaf.

Let $\{ U_i \}_{i \in I}$ be a family of subsets of $X$ covering $U = \bigcup_{i \in I} U_i$.
Suppose we are given a compatible family $\{ \sigma_i \}_{i \in I}$, with $\sg_i \in \Gamma(U_i)$
and $\sg_i |_{U_i \cap U_j} = \sg_j |_{U_i \cap U_j}$ for all $i, j \in I$.
The sheaf property requires that there exist a unique strategy $\sg \in \Gamma(U)$ such that $\sg |_{U_i} = \sg_i$ for all $i \in I$.

From the definition of restriction, if such a gluing $\sg$ exists, it must contain the union $\sg' \defeq \bigcup_{i \in I} \sg_i$.
So, if this $\sg'$ happens to be a strategy, by maximality it must be the required unique gluing of the family $\{\sg_i\}_{i\in I}$.
The union of down-closed sets is down-closed. 
Thus $\sg'$ can only fail to be a strategy if determinacy or totality fails.
We show that the first of these can never arise.

\begin{proposition}
%With notation as above, $\sg'$ is deterministic.
If $\{\sigma_i\}_{i \in I}$ is a compatible family for the presheaf $\Gamma$, then $\sg' \defeq \bigcup_{i \in I} \sg_i$ is deterministic.
\end{proposition}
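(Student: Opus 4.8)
The statement to prove is that $\sg'$ satisfies determinacy: if $s \in \sg'$, $s \acc x$, and both $s \cup \{(x,o)\}$ and $s \cup \{(x,o')\}$ belong to $\sg'$, then $o = o'$. My plan is to locate all three of these sections inside a single strategy $\sg_j$ of the family, and then simply invoke the determinacy of $\sg_j$. Concretely, I would choose $i, j \in I$ with $s \cup \{(x,o)\} \in \sg_i$ and $s \cup \{(x,o')\} \in \sg_j$. Since $\sg_i \subseteq \Hist(M_{U_i})$ and $\sg_j \subseteq \Hist(M_{U_j})$, the supports of these two histories lie in $U_i$ and $U_j$ respectively, so $\dom(s) \cup \{x\} \subseteq U_i \cap U_j$. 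The only vehicle for transporting sections between different members of the family is the compatibility condition $\sg_i|_{U_i \cap U_j} = \sg_j|_{U_i \cap U_j}$, and this identifies them precisely on $\Hist(M_{U_i \cap U_j})$. So the crux is to certify that the relevant histories, which a priori only live in $\Hist(M_{U_i})$ or $\Hist(M_{U_j})$, actually lie in $\Hist(M_{U_i \cap U_j})$.

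This is exactly where the main work lies, and it rests on a \emph{locality} property of histories: if $h \in \Hist(M_V)$ and $\dom(h) \subseteq W \subseteq V$, then $h \in \Hist(M_W)$. I would prove this by taking a generating chain $\es = h_0 \subseteq h_1 \subseteq \dots \subseteq h_n = h$ witnessing $h \in \Hist(M_V)$, in which each step adjoins a single event via some accessibility $h_k \acc_V x_k$. The enabling relation only consults a subset $t \subseteq h_k$ with $\dom(t) \cup \{x_k\} \subseteq V$; but since $\dom(t) \subseteq \dom(h) \subseteq W$ and $x_k \in \dom(h) \subseteq W$, we in fact have $\dom(t) \cup \{x_k\} \subseteq W$, so the same step is legal in $M_W$. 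Inductively, every $h_k$ lies in $\Hist(M_W)$, hence so does $h$. The subtle point, and the reason this step needs care, is that $\Hist(M_W)$ is \emph{not} closed under arbitrary subsets; what rescues us is that the whole support of $h$ already sits in $W$, so no enabling step ever reaches outside $W$.

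With the locality lemma in hand, the conclusion is routine. Since $\dom(s) \cup \{x\} \subseteq U_i \cap U_j$, the lemma places both $s \cup \{(x,o)\}$ and $s \cup \{(x,o')\}$ in $\Hist(M_{U_i \cap U_j})$, and likewise $s \in \Hist(M_{U_j})$. Compatibility then gives $s \cup \{(x,o)\} \in \sg_i|_{U_i \cap U_j} = \sg_j|_{U_i \cap U_j} \subseteq \sg_j$, so that $s \cup \{(x,o)\}$, $s \cup \{(x,o')\}$ and, by down-closure of $\sg_j$, $s$ itself all lie in $\sg_j$. Finally, the accessibility witness $t \subseteq s$ for $s \acc x$ has $\dom(t) \cup \{x\} \subseteq U_i \cap U_j \subseteq U_j$, so $s \acc_{U_j} x$; the determinacy of the strategy $\sg_j$ then forces a unique outcome extending $s$ by $x$, whence $o = o'$.
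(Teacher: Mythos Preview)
Your proof is correct and follows the same approach as the paper: pick $i,j$ with $s\cup\{(x,o)\}\in\sigma_i$ and $s\cup\{(x,o')\}\in\sigma_j$, observe $\dom(s)\cup\{x\}\subseteq U_i\cap U_j$, pass to the restriction $\sigma_i|_{U_i\cap U_j}=\sigma_j|_{U_i\cap U_j}$ via compatibility, and invoke determinacy there. The paper's proof is essentially your final paragraph.

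The one substantive difference is that you isolate and prove the \emph{locality lemma} (if $h\in\Hist(M_V)$ and $\dom(h)\subseteq W\subseteq V$ then $h\in\Hist(M_W)$), which is genuinely needed to conclude $s\cup\{(x,o)\}\in\sigma_i|_{U_i\cap U_j}=\sigma_i\cap\Hist(M_{U_i\cap U_j})$ from merely knowing $s\cup\{(x,o)\}\in\sigma_i\subseteq\Hist(M_{U_i})$ and $\dom(s)\cup\{x\}\subseteq U_i\cap U_j$. The paper's proof uses this implicitly in the word ``hence''. Your explicit argument via a generating chain is the right one, and your observation that $\Hist(M_W)$ is not closed under arbitrary subsets is exactly why this step deserves care. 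Your final verification that $s\acc_{U_j}x$ is also appropriate given the stated form of determinacy; alternatively, once both $s$ and $s\cup\{(x,o)\}$ lie in $\Hist(M_{U_i\cap U_j})$, accessibility there follows automatically from the inductive construction of histories.
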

\begin{proof}
Suppose that $s \cup \{ (x,o_k) \} \in \sigma'$ for $k = 1,2$. 
%By monotonicity, we can assume that $s$ is minimal such that $s \acc x$. 
For some $i,j \in I$ we have $s \cup \{ (x,o_1) \} \in \sigma_i$ and $s \cup \{ (x,o_2) \} \in \sigma_j$.
This implies that $\dom(s) \cup \{x\} \subseteq U_i \cap U_j$, and hence $s \cup \{ (x,o_1) \} \in \sigma_i |_{U_i \cap U_j}$ and $s \cup \{ (x,o_2) \}  \in \sigma_j |_{U_i \cap U_j}$. By compatibility and determinacy of $\sg_i$ and $\sg_j$, this implies $o_1 = o_2$.
\end{proof}

In general, it may not be possible to complete $\sg'$ to a total strategy, and if such an extension does exist, it may not be unique.
We give simple examples to show how these can happen.

\begin{example}
Fix $X = \{ x,y,z \}$, $O_w = \{ 0,1\}$ for all $w \in \{ x,y,z\}$, and the following enabling relation:
\[ \es \vdash x, \qquad \es \vdash y, \qquad \{  (x,0) \} \vdash z, \qquad \{  (y,0) \} \vdash z .\]
Consider the cover consisting of $U_1 \defeq \{x,z\}$ and $U_2 \defeq \{ y,z \}$, and take strategies
\[\sg_1 \defeq \{ \, \es, \, \{ (x,0) \}, \, \{ (x,0), (z,0) \} \, \} \quad\text{ and }\quad \sg_2 \defeq \{ \, \es, \, \{ (y,0) \}, \, \{ (y,0), (z,1) \} \, \} \, . \]
Note that $\sg_1$ and $\sg_2$ are compatible since they both restrict to the empty strategy over $U_1 \cap U_2 = \{ z \}$, as the measurement $z$ is not enabled. Similarly, $\sg_1$ and $\sg_2$ are both total. However, $\sg_1 \cup \sg_2$ is not total, nor can it be completed to a total strategy. Note that $y$ is accessible from $s= \{ (x,0), (z,0) \}$ so an extension of $s$ must assign an outcome to $y$, which must be equal to zero due to downward-closedness. Following the same reasoning for $x$, we are forced to include both  $\{ (x,0), (y,0), (z,0) \}$ and $\{ (x,0), (y,0), (z,1) \}$ in the strategy, contradicting determinism.
\end{example}

\begin{example}
Fix $X = \{ x,y,z \}$, $O_w = \{ 0,1\}$ for all $w \in \{ x,y,z\}$, and the following enabling relation:
\[ \es \vdash x, \qquad \es \vdash y, \qquad \{  (x,0),  (y,0) \} \vdash z .\]
Consider the cover consisting of $U_1 \defeq \{ x,z\}$ and $U_2 \defeq \{ y,z \}$, and take strategies
\[\sg_1 \defeq \{ \, \es, \, \{ (x,0) \} \, \} \quad\text{ and }\quad \sg_2 \defeq \{ \, \es, \, \{ (y,0) \} \, \} \, . \]
Note that $\sg_1$ and $\sg_2$ are compatible since they both restrict to the empty strategy over $U_1 \cap U_2 = \{ z \}$, as the measurement $z$ is not enabled.
Similarly, $\sg_1$ and $\sg_2$ are both total, since $z$ is not accessible from any history over $U_1$ or $U_2$. However, $\sg_1 \cup \sg_2$ is not total,
since $y$ is accessible from $\{ (x,0) \}$ and $x$ is accessible from $\{ (y,0) \}$.
There is a unique choice of outcomes that can be assigned to these variables leading to restrictions to $\sg_1$ and $\sg_2$ as required for a gluing. Both lead to the history $\{ (x,0), (y,0) \}$.
However, $z$ is now accessible from this history, and there are no constraints on the value assigned to it, so we lose uniqueness.
\end{example}

This example is rather pathological, as it hinges on the inaccessibility of $z$ in the cover, leading to the following question.

\begin{question}
Is there a notion of ``good cover'' which implies that gluings exist and are unique?
\end{question}

\textbf{Note added in proof} We have found a positive answer to this question. If we require that the cover comprises sets of measurements which are causally secured in an appropriate sense, then the sheaf property holds. This will be described in detail in a sequel to the present paper.

\section{Experimenter strategies and adaptive computation}
\label{sec:experimenterstrategies-adaptive}

The strategies considered so far have been strategies for Nature. These prescribe a response -- an outcome -- for each measurement that can be chosen by the Experimenter.
Using the duality inherent in game theory, there is also a notion of strategy for Experimenter.
To formulate this, we use the following observation.

\begin{proposition}
For a history $s \in \Hist(M)$, the following are equivalent:
\begin{enumerate}
\item $s$ is maximal in $(\Hist(M),{\subseteq})$;
\item no measurement is accessible from $s$, \ie for all $x \in X$, $\neg(s \acc x)$.
\end{enumerate}
\end{proposition}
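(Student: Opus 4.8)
The plan is to prove the two implications separately, with the forward direction being essentially immediate and the reverse direction requiring a short argument that exploits the consistency of histories.

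For $(1) \Rightarrow (2)$, I would argue contrapositively. Suppose some measurement $x$ is accessible from $s$, \ie $s \acc x$. By the definition of accessibility, $x \notin \dom(s)$, and since $\Hist(M)$ is closed under accessibility, $s \cup \{ (x,o) \} \in \Hist(M)$ for any $o \in O_x$. As $x \notin \dom(s)$, this added event is genuinely new, so $s \subsetneq s \cup \{ (x,o) \}$, witnessing that $s$ is not maximal. This direction uses nothing beyond the closure clause in the definition of $\Hist(M)$.

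For $(2) \Rightarrow (1)$, I would again argue contrapositively: assuming $s$ is not maximal, I aim to produce an accessible measurement. So suppose $s \subsetneq t$ for some history $t \in \Hist(M)$. The key tool is the explicit least-fixed-point description via $\{ H_k \}$: every history is finite and is built in finitely many accessibility steps, so I can fix a construction sequence $\es = t_0 \subseteq t_1 \subseteq \cdots \subseteq t_m = t$ with $t_i = t_{i-1} \cup \{ (x_i, o_i) \}$ and $t_{i-1} \acc x_i$ at each stage. Let $i$ be the least index with $(x_i, o_i) \notin s$; such an $i$ exists because $t \not\subseteq s$, and by minimality $t_{i-1} \subseteq s$.

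It then remains to check that $x_i$ is accessible from $s$. Since $t_{i-1} \acc x_i$, there is some $r \subseteq t_{i-1} \subseteq s$ with $r \vdash x_i$, which supplies the enabling half of accessibility. The crux is to verify $x_i \notin \dom(s)$, and here I would invoke consistency: $s \subseteq t$ and $t$ is a section (hence consistent), so if $x_i$ were in $\dom(s)$, say $(x_i, o') \in s \subseteq t$, then consistency of $t$ together with $(x_i, o_i) \in t$ would force $o' = o_i$, contradicting $(x_i, o_i) \notin s$. Hence $x_i \notin \dom(s)$, so $s \acc x_i$, contradicting $(2)$. I expect this consistency step to be the only genuinely delicate point; everything else is routine bookkeeping on the inductive construction of histories.
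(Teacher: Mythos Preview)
The paper states this proposition without proof, so there is no reference argument to compare against. Your argument is correct and is the natural one: the forward direction follows directly from the closure clause in the definition of $\Hist(M)$ (implicitly assuming each $O_x$ is nonempty, which is standard here), and for the reverse direction your use of a construction sequence for $t$ together with consistency of $t$ to rule out $x_i \in \dom(s)$ is exactly the right idea. The step you flag as delicate---using $s \subseteq t$ and consistency of $t$ to force $o' = o_i$ and derive a contradiction---is indeed the only nontrivial point, and you handle it correctly.
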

%We write $\Hmax(M)$ for the maximal histories.

We now define a \emph{strategy for Experimenter} over the game $M$ to be a set of histories $\tau \subseteq \Hist(M)$ satisfying the following conditions:
\begin{itemize}
\item $\tau$ is downwards closed: if $s, t \in \Hist(M)$ and $s \subseteq t \in \tau$, then $s \in \tau$.
\item $\tau$ is co-total: if $s \in \tau$
and $s$ is not maximal,
then there is a measurement $x$
with $s \acc x$ such that $s \cup \{ (x,o) \} \in \tau$ for some $o \in O_x$.
Moreover, for all such $x$, $s \cup \{ (x,o') \} \in \tau$ for all
$o' \in O_x$.
\end{itemize}
Thus at each stage, the strategy determines which measurements may be performed.
Note that it may allow more than one measurement, so some nondeterminism remains.

For each such measurement, it must then accept any possible response from Nature. The future choices of the Experimenter can then depend on Nature's responses, allowing for adaptive protocols.

If we are given a strategy for Nature $\sg$ and a strategy for the Experimenter $\tau$, we can play them off against each other, resulting in $\langle \sigma \mid \tau \rangle \defeq \sg \cap \tau$. 
%Since at each stage $\tau$ chooses a unique measurement, and $\sg$ a unique outcome for that measurement, 
This is the down-set of a set of maximal histories.
This operation can be extended to distributions on strategies, \ie to \emph{mixed strategies}, in a bilinear fashion.\footnote{The extension to mixed strategies hinges on the fact that the distribution monad is commutative.}

We refer to strategies for Nature as N-strategies, and to strategies for Experimenter as E-strategies. 

%Although they are formally dual, they have quite different characters:
%\begin{itemize}
%\item We recall the Computer Science distinction between \emph{eager} and \emph{lazy} evaluation: eager evaluation does as much as possible at each stage of the computation, while lazy evaluation does as little as possible.
%N-strategies are eager: at each stage, they define outcomes for every enabled measurement. E-strategies, by contrast, are lazy: they choose just one measurement at each stage.
%\item Another significant difference concerns the relationship of these types of strategies to the causality constraints. N-strategies are upper bounded by these constraints, in the sense that they allow signalling from the causal past, and only from the causal past. E-strategies, by contrast, can \emph{impose} dependencies through their adaptive behaviour, over and above the causal constraints.
%\item Yet another difference concerns how they relate to the compatibility conditions imposed by a cover $\Cover = \{ C_i \}_{i \in I}$. N-strategies $\sg$ over the cover, as captured in the definition of the presheaf $\Gamma$, satisfy the cover conjunctively: for some $i \in I$, for all $s \in \sg$, $\dom(s) \subseteq C_i$.
%For E-strategies $\tau$, the natural condition is disjunctive: for all $s \in \tau$, for some $i \in I$, $\dom(s) \subseteq C_i$.
%This allows E-strategies to decide adaptively, depending on measurement outcomes, which compatible set of measurements to perform.
%\end{itemize}

\subsection{Anders--Browne revisited}

We now show how the Anders--Browne construction of an \textsc{OR} gate discussed in section~\ref{ABsec} can be formalised using an Experimenter strategy.

First, we have the description of the standard GHZ construction. This is given by a flat measurement scenario with $X = \{ A_i, B_j, C_k \mid i,j,k \in \{ 0,1\} \}$, and $O_x = \{ 0,1 \}$ for all $x \in X$.
The maximal compatible sets of measurements are all sets of the form $\{ A_i, B_j, C_k \}$ with $i,j,k \in \{ 0,1\}$, \ie a choice of one measurement per each site or agent.
We regard each measurement as initially enabled. The N-strategies for this scenario form the usual sections assigning an outcome to each choice of measurement for each site, and the GHZ model assigns distributions on these strategies as in the table shown in section~\ref{ABsec}.

To get the Anders--Browne construction, we consider the E-strategy which initially allows any $A$ or $B$ measurement to be performed, and after a history $\{ (A_i, o_1), (B_j, o_2) \}$ chooses the $C$-measurement $C_{i \oplus j}$.
Playing this against the GHZ model results in a strategy that computes the \textsc{OR} function with probability 1.

The full power of adaptivity is required when using this as a building block to implement a more  involved logical circuit. Suppose that the output of the \textsc{OR} gate above is to be fed as the first input of a second \textsc{OR} gate, built over a GHZ scenario with measurements labelled $\{ A'_i, B'_j, C'_k \mid i,j,k \in \{ 0,1\} \}$.
The $E$-strategy implements the first \textsc{OR} gate as above, with any $B'$ measurement also enabled, being a free input.
After that, the $A'$-measurement can be determined: after a history containing $\{ (A_i, o_1), (B_j, o_2), (C_{i \oplus j}, o_3) \}$, the E-strategy chooses the $A'$-measurement $A'_{o_1 \oplus o_2 \oplus o_3}$. The second \textsc{OR} gate is then implemented like the first. Note that the choice of $A'$-measurement depends not only on previous measurement choices, but on outcomes provided by Nature.

\section{Outlook}

In a forthcoming extended version of this paper, we show how a number of additional examples, including Leggett--Garg, can be handled in our approach.
We also show that our formalism faithfully represents a number of others, including Gogioso--Pinzani scenarios, adaptive MBQC, and causal networks.
In forthcoming related work %\cite{our_preprint_in_preparation}, 
we incorporate a form of memory (or look-back) restriction in some simple scenarios whereby Nature may only remember the $k$ most recent events, and obtain a \Vorobev-type theorem \cite{vorobev1962consistent,barbosa2015DPhil} in that setup.

In future work, we aim to employ our formalism to describe unconditional quantum advantage in shallow circuits, building on \cite{bravyi2018quantum,aasnaess2022comparing}.
We will also investigate other applications to quantum advantage.

We also aim to clarify how our approach can be related to the currently very active study of indefinite causal orders \cite{oreshkov2012quantum,chiribella2013quantum}.

The game formulation opens up the possibility of tapping into the sophisticated literature in game theory, \eg on properties of game trees \cite{kuhn1953extensive,isbell1958finitary}. This is likely to offer important pointers for further development of our framework.
For example, the concepts of randomised strategies (and the relationship between global and local randomisations) or memory properties are likely to be fruitful.
A similar source of potential inspiration is the  literature on game semantics, where game concepts are used to model a wide array of programming language features; see e.g.~\cite{murawski2016invitation} for a recent overview.

\subsection*{Acknowledgements}
This work was developed in part while AS was hosted on secondment at INL.

This work is supported by the Digital Horizon Europe project FoQaCiA, \textit{Foundations of quantum computational advantage}, GA no.~101070558, funded by the European Union, NSERC (Canada), and UKRI (U.K.).

SA also acknowledges support from EPSRC -- Engineering and Physical Sciences Research Council (U.K.) through
EPSRC fellowship EP/V040944/1, \textit{Resources in Computation}. 
RSB also acknowledges support from FCT -- Funda\c{c}\~{a}o para a Ci\^encia e a Tecnologia (Portugal) through CEECINST/00062/2018.
AS acknowledges support from EPSRC Standard Research Studentship (Doctoral Training Partnership), EP/T517811/1, and the Smith-Westlake Graduate Scholarship at St. Hugh's College.

We thank the journal referees for their comments, which suggested several improvements in the presentation.

\bibliographystyle{amsplain}
\bibliography{ccc}

\vspace{2cm}

\end{document}